\documentclass[12pt,a4paper,reqno]{article}
\usepackage{graphics}
\usepackage{epsfig}

\textheight 9.1in \textwidth 6.2in \topmargin 0in \headheight 0in
\oddsidemargin .1in \evensidemargin .1in
\usepackage{amsmath,amsthm,amssymb}
\newtheorem{theorem}{Theorem}[section]
\newtheorem{corollary}[theorem]{Corollary}
\newtheorem{lemma}[theorem]{Lemma}

\newtheorem{example}[theorem]{Example}
\theoremstyle{definition}

\theoremstyle{remark}

\numberwithin{equation}{section}

\begin{document}
\title{The Structure of $\mathbb{Z}_{2}[u]\mathbb{Z}_{2}[u,v]$-additive codes
\thanks{The first author would like to thank the Department of Science and Technology (DST), New Delhi, India for their financial support
in the form of INSPIRE Fellowship
(DST Award Letter No.IF130493/DST/INSPIRE Fellowship / 2013/(362) Dated: 26.07.2013) 
to carry out this work.}
}
\author{N. Annamalai\\
Research Scholar\\
Department of Mathematics\\
Bharathidasan University\\
Tiruchirappalli-620 024, Tamil Nadu, India\\
{Email: algebra.annamalai@gmail.com}
\bigskip\\
C.~Durairajan\\
Assistant Professor\\
Department of Mathematics\\ 
School of Mathematical Sciences\\
Bharathidasan University\\
Tiruchirappalli-620024, Tamil Nadu, India\\
{Email: cdurai66@rediffmail.com}
\hfill \\
\hfill \\
\hfill \\
\hfill \\
{\bf Proposed running head:} The Structure of $\mathbb{Z}_{2}[u]\mathbb{Z}_{2}[u,v]$-additive codes}
\date{}
\maketitle

\newpage

\vspace*{0.5cm}
\begin{abstract}In this paper,  we study  the algebraic structure of $\mathbb{Z}_{2}[u]\mathbb{Z}_{2}[u,v]$-additive codes which are $\mathbb{Z}_{2}[u,v]$-submodules
where $u^2=v^2=0$ and $uv=vu$. In particular, we determine
a Gray map from $\mathbb{Z}_{2}[u]\mathbb{Z}_{2}[u,v]$ to $\mathbb{Z}_{2}^{2\alpha+8\beta}$ and study generator and parity check matrices 
for these codes. Further we study the structure of $\mathbb{Z}_{2}[u]\mathbb{Z}_{2}[u,v]$-additive cyclic codes and constacyclic codes.
\end{abstract}
\vspace*{0.5cm}

{\it Keywords:}  Additive codes; Generator Matrix; Parity-check Matrix.\\
{\it 2000 Mathematical Subject Classification:} Primary: 94B25, Secondary: 11H31
\vspace{0.5cm}
\vspace{1.5cm}

\noindent
Corresponding author:\\ 
\\
 \hspace{1cm} 
Dr. C. Durairajan\\
\noindent
Assistant Professor\\
Department of Mathematics \\
Bharathidasan University\\
Tiruchirappalli-620024, Tamil Nadu, India\\
\noindent
E-mail: cdurai66@rediffmail.com
\newpage

\newpage

\section{Introduction}
Let $\mathbb{Z}_{2}$ be the ring of integers modulo $2.$ Let  $\mathbb{Z}_{2}^{n}$ denote the set of all binary vectors of lenght $n.$ Any non-empty
 subset of $\mathbb{Z}_{2}^{n}$ is a binary code of length $n$ and a subgroup of $\mathbb{Z}_{2}^{n}$ is called a binary 
 linear code. In this paper, we introduce a 
 subfamily of binary linear codes, called $\mathbb{Z}_{2}[u]\mathbb{Z}_{2}[u,v]$-linear codes.
 
 Additive codes were first defined by Delsarte in $1973$ in terms of association schemes \cite{Bro}, \cite{Hamm}. In a translation association scheme, an additive code 
 is generally defined as a subgroup of the underlying abelian group.

Codes over finite rings were studied in \cite{andr}, \cite{eug} and with the remarkable paper by Hammons $\it{et .al} \cite{Bro}$ have been studied intensively for the last four decades. A study on mixed
alphabet codes has been introduced and some bounds have been presented by Brouwer $\it{et. al}.\cite{Ayod}.$ 

Later, $\mathbb{Z}_{2}\mathbb{Z}_{4}$-additive codes were generalized to $\mathbb{Z}_{2}\mathbb{Z}_{2^s}$-additive codes by Aydogu and Siap
\cite{Ayo}. For $s\geq 2,$ these generalisations to $\mathbb{Z}_{2}\mathbb{Z}_{2^s}$-additive codes are interesting since they provided good binary codes via
Gray maps with rich algebraic structure.  Furthermore, the structure of 
$\mathbb{Z}_{p^{r}}\mathbb{Z}_{p^{s}}$-additive codes and their duals has been determined \cite{ayo}. In this correspondance, begin inspired by 
these additive codes, a generalization towards another direction that have a
good algebraic structure and provide good binary codes is presented.

The aim of this paper is the study of the algebraic structure of $\mathbb{Z}_{2}[u]\mathbb{Z}_{2}[u,v]$-additive codes and their dual codes. 
It is organized as follows. In Section II, we recall the  linear codes over the ring $\mathbb{Z}_{2}+u\mathbb{Z}_{2}$ and
the linear codes over the ring $\mathbb{Z}_{2}+u\mathbb{Z}_{2}+v \mathbb{Z}_{2}+uv\mathbb{Z}_{2}.$ In Section III, we study the concept of 
$\mathbb{Z}_{2}[u]\mathbb{Z}_{2}[u,v]$-additive codes and duality, we determine the generator matrices of the these codes. In section IV, we study
the structure of $\mathbb{Z}_{2}[u]\mathbb{Z}_{2}[u,v]$-additive cyclic codes. In section V, we study
the structure of $\mathbb{Z}_{2}[u]\mathbb{Z}_{2}[u,v]$-additive constacyclic codes.

\section{Linear Codes over Chain Rings}
Let $R$ be a ring, then $R^{n}$ is an $R$-module. A submodule of $R^{n}$ is called $R$-linear code. We denote the Hamming weight of a word
as $w_{H}$ and Lee weight as $w_{L}.$

 The set $\mathbb{Z}_{2}+u\mathbb{Z}_{2}=\{0,1,u,1+u\},$ where $u^{2}=0,$ is a commutative finite chain ring of four elements. 
 We denote this ring as $R_{1}.$

 In \cite{ash}, a $R_{1}$-linear code $C$ is permutation equivalent to a code with generator matrix
 
 \begin{equation*}
G=
 \begin{bmatrix}
   I_{l_{0}}&A&B_{1}+uB_{2}\\
   0&uI_{l_{1}}&uD
  \end{bmatrix},
 \end{equation*}
where $A,B_{1},B_{2}$ and $D$ are matrices over $\mathbb{Z}_{2}.$

Define
$w_{L}(a+ub)=w_{H}(b,a\oplus b)$ for all $a,b\in \mathbb{Z}_{2}$ where the Hamming weight of a word is the number of non-zero coordinates in it.

The definition of the weight immediately leads to a Gray map from $R_{1}$ to $\mathbb{Z}_{2}^2$ which can
naturally be extended to $R_{1}^n$:

$\chi_{L}:(R_{1}^n, \text{Lee weight}) \rightarrow (\mathbb{Z}_{2}^{2n}, \text{Hamming weight})$ is defined
as:
$$\chi_{L}(\bar{a}+u\bar{b})=(\bar{b},\bar{a}\oplus \bar{b}),$$ 
where $\bar{a},\bar{b}\in\mathbb{Z}_{2}^n$ and $\oplus$ denotes addition in $\mathbb{Z}_{2}.$

Let $R_{2}$ be the ring $\mathbb{Z}_{2}[u,v]:=\mathbb{Z}_{2}+u\mathbb{Z}_{2}+v\mathbb{Z}_{2}+uv\mathbb{Z}_{2}$ with $u^2=0,\,\, v^2=0$
 and $uv=vu$ where $\mathbb{Z}_{2}=\{0,1\}.$ Then 
 $R_{2}$ is a commutative finite chain ring of $16$ elements and 
  $a+ub+vc+uvd\in R_{2}$ is a unit iff $a\neq 0.$
 
The group of units of $R_{2}$ is given by
  $R_{2}^*=\{1,1+u,1+v,1+uv,1+u+v,1+u+v+uv,,1+u+uv,1+v+uv\}.$
It is also a local ring with the unique maximal ideal $uvR_{2}.$

 A linear code $C$ of lenght $n$ over the ring $R_{2}$ is an $R_{2}$-submodule of $R_{2}^n.$

 A non-zero $R_{2}$-linear code $C$  has a generator matrix which after a suitable permutation of the coordinates can be written in the form:
 \begin{equation}
  G=
   \begin{bmatrix}

I_{k_{0}}&A_{01}&A_{02}&A_{03}&A_{04}\\
 0&uI_{k_{1}}&uA_{12}&uA_{13}&uA_{14}\\
  0&0&vI_{k_{2}}&vA_{23}&vA_{24}\\
 0&0&0&uvI_{k_{3}}&uvA_{34}
\end{bmatrix}
  \end{equation}
where $A_{ij}$ are all matrices over $\mathbb{Z}_{2}.$

To define the Lee weights and Gray maps for codes over $R_{2}$. 
Define
$w_{L}(a+ub+vc+uvd)=w_{H}(d,a\oplus d,b\oplus d,a\oplus b\oplus d, c\oplus d,a\oplus c\oplus d,
b\oplus c\oplus d,a\oplus b\oplus c\oplus d)$ for all $a,b,c,d\in \mathbb{Z}_{2}.$
The definition of the weight immediately leads to a Gray map from $R_{2}$ to $\mathbb{Z}_{2}^8$ which can naturally be extended to $R^n$:

$\phi_{L}:(R_{2}^n, \text{Lee weight})
\rightarrow (\mathbb{Z}_{2}^{8n}, \text{Hamming weight})$ 

is defined as:
\begin{equation*}
 \begin{split}
  \phi_{L}(\bar{a}+u\bar{b}+v\bar{c}+uv\bar{d})&=(\bar{d},\bar{a}\oplus \bar{d},\bar{b}\oplus \bar{d},\bar{a}\oplus \bar{b}\oplus \bar{d},
\bar{c}\oplus \bar{d},\\
&\bar{a}\oplus \bar{c}\oplus \bar{d},\bar{b}\oplus \bar{c}\oplus \bar{d},\bar{a}\oplus \bar{b}\oplus \bar{c}\oplus \bar{d})
 \end{split}
\end{equation*}

where $\bar{a},\bar{b},\bar{c},\bar{d}\in\mathbb{Z}_{2}^n.$

\section{$\mathbb{Z}_{2}[u]\mathbb{Z}_{2}[u,v]$-additive codes}

We know that the ring $\mathbb{Z}_{2}[u]$ is a subring of the ring $R_{2}$. Being inspired by the structure of $\mathbb{Z}_{2}\mathbb{Z}_{2}[u]$-additive codes,
we define the following set:\\
\begin{center}
 $\mathbb{Z}_{2}[u]\mathbb{Z}_{2}[u,v]=\{(a,b)\,\,\, | \,\,\, a\in \mathbb{Z}_{2}[u] \,\, \text{and} \,\, b\in \mathbb{Z}_{2}[u,v]\}.$
\end{center}

The set $\mathbb{Z}_{2}[u]\mathbb{Z}_{2}[u,v]$ cannot be endowed with algebraic structure directly. It is not well defined with respect to the 
usual scalar multiplication by  $v\in R_{2}.$ Therefore, this set is not an $R_{2}$-module. To make it well defined and enrich with an algebraic
structure we introduce a new scalar multiplication as follows:
We define a mapping
\begin{center}
 $\eta: R_{2}\rightarrow \mathbb{Z}_{2}[u],$\\
 $\eta(a+ub+vc+uvd)=a+ub$\\
i.e., $\eta(x)=x \mod v.$
\end{center}
 It is easy to conclude that $\eta$ is a ring homomorphism. Using this map, we define a scalar multiplication. For
$v=(a_{0},a_{1},\cdots,a_{\alpha-1},b_{0},b_{1},\cdots,b_{\beta-1})\in \mathbb{Z}_{2}[u]^{\alpha} \times
R_{2}^\beta$
and  
$l\in R_{2},$ we have
\begin{equation}\label{a}
 lv=(\eta(l)a_{0},\eta(l)a_{1},\cdots,\eta(l)a_{\alpha-1},lb_{0},lb_{1},\cdots,lb_{\beta-1}).
\end{equation}

 A non-empty subset $C$ of  $R_{1}^{\alpha} \times R_{2}^\beta$ is called \textbf{$\mathbb{Z}_{2}[u]\mathbb{Z}_{2}[u,v]$-additive code} if it is a
 $R_{2}$-submodule of 
 $R_{1}^{\alpha} \times R_{2}^\beta$ with respect to the scalar 
 multiplication defined in Eq.\ref{a}.
 Then the binary image $\Phi(C)=D$ is called $\mathbb{Z}_{2}[u]\mathbb{Z}_{2}[u,v]$-linear code of length $n=2\alpha+8\beta$ 
 where $\Phi$ is a map
 from $R_{1}^{\alpha}\times R_{2}^\beta$ to $\mathbb{Z}_{2}^n$ defined as
 \begin{equation*}
 \begin{split}
  \Phi(x,y)&=(q_{0},q_{1},\cdots,q_{\alpha-1}, q_{0}+r_{0},\cdots,q_{\alpha-1}+r_{\alpha-1},\\
  &d_{0},d_{1},\cdots,d_{\beta-1},a_{0}+d_{0},a_{1}+d_{1},\cdots,a_{\beta-1}+d_{\beta-1},\\
  &b_{0}+d_{0},b_{1}+d_{1},\cdots,b_{\beta-1}+d_{\beta-1},\\
    &a_{0}+b_{0}+d_{0},a_{1}+b_{1}+d_{1},\cdots,a_{\beta-1}+b_{\beta-1}+d_{\beta-1},\\
  &c_{0}+d_{0},c_{1}+d_{1},\cdots,c_{\beta-1}+d_{\beta-1},\\
  &a_{0}+c_{0}+d_{0},a_{1}+c_{1}+d_{1},\cdots,a_{\beta-1}+c_{\beta-1}+d_{\beta-1},\\
  &b_{0}+c_{0}+d_{0},b_{1}+c_{1}+d_{1},\cdots,b_{\beta-1}+c_{\beta-1}+d_{\beta-1},\\ 
  &a_{0}+b_{0}+c_{0}+d_{0},\cdots,a_{\beta-1}+b_{\beta-1}+c_{\beta-1}+d_{\beta-1}),  
 \end{split}
 \end{equation*}
for all $x=(x_{0},x_{1},\cdots,x_{\alpha-1})\in R_{1}^\alpha$ where $x_{i}=r_{i}+uq_{i}$  for $0\leq i\leq \alpha-1$ and 
$y=(y_{0},y_{1},\cdots,y_{\beta-1})\in R_{2}^\beta$, where $y_{j}=a_{j}+ub_{j}+vc_{j}+uvd_{j}$ for $0\leq j\leq \beta-1.$

 If $C\subseteq R_{1}^{\alpha}\times R_{2}^{\beta}$ is a $\mathbb{Z}_{2}[u]\mathbb{Z}_2[u, v]$-additive code, group isomorphic to 
 $\mathbb{Z}_{2}^{2l_{0}}\times \mathbb{Z}_{2}^{l_{1}}\times \mathbb{Z}_2^{4k_{0}}\times \mathbb{Z}_2^{3k_{1}}\times 
 \mathbb{Z}_2^{2k_{2}}\times \mathbb{Z}_{2}^{k_{3}},$
 then $C$ is called a
 \textbf{$\mathbb{Z}_{2}[u]\mathbb{Z}_2[u, v]$-additive code of type $(\alpha, \beta;l_{0},l_{1}; k_{0},k_{1},k_{2},k_{3})$ }
 where $l_{0},l_{1},k_{0}, k_{1},k_{2},$ and $k_{3}$  are defined above.

\subsection{Generator matrices of $\mathbb{Z}_{2}[u]\mathbb{Z}_{2}[u,v]$-additive codes}

\begin{theorem}
 Let $C$ be a $\mathbb{Z}_{2}[u]\mathbb{Z}_2[u, v]$-additive code of type  $(\alpha, \beta;l_{0},l_{1}; k_{0},k_{1},k_{2},k_{3})$. 
 Then $C$ is permutation equivalent to a  $\mathbb{Z}_{2}[u]\mathbb{Z}_2[u, v]$-additive code with standard form matrix.
 
 \begin{equation}\label{b}
 G=
 \begin{bmatrix}
 B&T\\
 S&A
 \end{bmatrix}
  \end{equation}
  where
  \begin{equation*}
  \begin{split}
 B&=
 \begin{bmatrix}
I_{l_{0}}&B_{01}&B_{02}\\
0&uI_{l_{1}}&uB_{12}
 \end{bmatrix}\\  
 T&=
 \begin{bmatrix}
 0&0&0&vT_{01}&vT_{02}\\
 0&0&0&0&uvT_{12}
 \end{bmatrix}\\
 S&=
 \begin{bmatrix}
 0&S_{01}&S_{02}\\
 0&0&uS_{12}\\
 0&0&0\\
 0&0&0
 \end{bmatrix}\\
  A&=
   \begin{bmatrix}
I_{k_{0}}&A_{01}&A_{02}&A_{03}&A_{04}\\
 0&uI_{k_{1}}&uA_{12}&uA_{13}&uA_{14}\\
  0&0&vI_{k_{2}}&vA_{23}&vA_{24}\\
 0&0&0&uvI_{k_{3}}&uvA_{34}
\end{bmatrix}
\end{split}
 \end{equation*}.

  Here, $B_{ij}$ are matrices over $R_{1}$ and $T_{ij}$ are matrices over
  $R_{2}$ for $0\leq i<2$ and $0<j\leq 2$. Further, 
  for $0\leq d < 2,\,\, 0\leq t <4,\,\, 0<q\leq 4,$ $S_{dj}$ and $A_{tq}$ are matrices over
  $R_{2}$.
   Also, $I_{l_{w}}$ and $I_{k_{y}}$ are the identity matrices of size $l_{w}$ and $k_{y}$, where $0\leq w\leq 1$ and $0\leq y\leq 3.$ 
\end{theorem}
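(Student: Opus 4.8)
The plan is to reduce the $R_2$-submodule $C \subseteq R_1^{\alpha} \times R_2^{\beta}$ to standard form by a systematic Gaussian-type elimination that respects the chain structure of both rings and the twisted scalar multiplication of Eq.~\eqref{a}. First I would split the generator set of $C$ according to the module structure $\mathbb{Z}_2^{2l_0} \times \mathbb{Z}_2^{l_1} \times \mathbb{Z}_2^{4k_0} \times \mathbb{Z}_2^{3k_1} \times \mathbb{Z}_2^{2k_2} \times \mathbb{Z}_2^{k_3}$ given by the type: the $l_0, l_1$ parameters govern the $R_1^{\alpha}$-part and the $k_0, k_1, k_2, k_3$ govern the $R_2^{\beta}$-part. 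The key observation is that, because $\eta(v) = v \bmod v = 0$, multiplication by $v$ annihilates the $R_1$-coordinates; hence any generator whose leading entry lies in the $R_2$-block at a ``$v$-level'' or ``$uv$-level'' must have all its $R_1$-coordinates forced into the submodule structure already accounted for. This is exactly what produces the two zero rows at the bottom of the block $S$ and the leading zero column in $T$ and $S$.

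The main steps, carried out in order, are as follows. I would first apply column permutations within the $\alpha$ coordinates over $R_1$ and, using the known standard form over $R_1$ recalled from~\cite{ash}, bring the purely-$R_1$ generators into the shape of the block $B = \begin{bmatrix} I_{l_0} & B_{01} & B_{02} \\ 0 & uI_{l_1} & uB_{12} \end{bmatrix}$, which isolates $l_0$ unit-leading rows and $l_1$ rows with leading entry $u$. Next I would apply the chain-ring reduction over $R_2$ recalled in Eq.~(2.1) to the $\beta$ coordinates, producing the four-tier block $A$ with the identity blocks $I_{k_0}, I_{k_1}, I_{k_2}, I_{k_3}$ scaled by $1, u, v, uv$ respectively; the four tiers correspond precisely to the exponents of $k_0, k_1, k_2, k_3$ in the type. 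The cross-blocks $T$ (the $R_2$-entries attached to the $B$-rows) and $S$ (the $R_1$-entries attached to the $A$-rows) are then cleaned up: using the unit-leading and $u$-leading pivots of $A$, I would clear entries of $T$ above them, which forces the surviving entries of $T$ to sit only in the $v$- and $uv$-columns, giving $T = \begin{bmatrix} 0&0&0&vT_{01}&vT_{02} \\ 0&0&0&0&uvT_{12} \end{bmatrix}$; symmetrically, reducing $S$ against the $B$-pivots leaves $S = \begin{bmatrix} 0 & S_{01} & S_{02} \\ 0 & 0 & uS_{12} \\ 0 & 0 & 0 \\ 0 & 0 & 0 \end{bmatrix}$.

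The step I expect to be the main obstacle is verifying consistency of the row reductions against the \emph{twisted} scalar action rather than ordinary $R_2$-multiplication. Because scalars act on the $R_1$-part through $\eta$ and on the $R_2$-part directly, an elementary row operation $R_i \mapsto R_i + \ell R_j$ with $\ell \in R_2$ modifies the two halves differently, and one must check that such operations still preserve $C$ as a submodule under Eq.~\eqref{a} and that they genuinely eliminate the intended entries on both sides simultaneously. In particular I must confirm that the bottom two rows of $S$ vanish: the $v$- and $uv$-tier generators of $A$ arise from scalars $v$ and $uv$, whose images under $\eta$ are $0$, so these rows can carry no nonzero $R_1$-component, which is precisely the structural reason the last two row-blocks of $S$ are forced to zero. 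Once this compatibility is established, a dimension count matching the six exponents of the type against the six pivot-blocks $(I_{l_0}, uI_{l_1}, I_{k_0}, uI_{k_1}, vI_{k_2}, uvI_{k_3})$ completes the argument, and tracking the column permutations used throughout shows the resulting code is permutation equivalent to $C$.
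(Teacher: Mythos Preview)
Your proposal is correct and follows essentially the same approach as the paper: reduce the first $\alpha$ coordinates to the $R_1$-standard form $B$, then reduce the last $\beta$ coordinates to the $R_2$-standard form $A$ while protecting $B$, and finally tidy the cross-blocks $S,T$. The paper's own proof is in fact only a brief sketch of exactly these steps; your account is considerably more detailed, and your explicit observation that $\eta(v)=\eta(uv)=0$ forces the lower two row-blocks of $S$ to vanish supplies the structural justification that the paper leaves implicit.
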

\begin{proof}
 First, by focusing on the first $\alpha$ coordinates which is a projection of the $\mathbb{Z}_{2}[u]\mathbb{Z}_{2}[u,v]$-additive code to a linear
 $\mathbb{Z}_{2}[u]$-code and applying the necessary row operations together with column operations(if neessary) we obtain the matrix $B$ as
 in the upper left corner of the matrixEq.\ref{b}. Next, by focusing on the last $\beta$ coordinates and protecting the form of the matrix $B,$ 
 one consider linear codes over the ring $\mathbb{Z}_{2}[u,v]$ for which the generator matrices are well known to be of the form $A$ as in Eq.\ref{b}.
  Finally, by protecting both the form of $G$ as in Eq.\ref{b} which is referred to as the standard form matrix.
 
\end{proof}

\begin{example}
 Let $C$ be a $\mathbb{Z}_{2}[u]\mathbb{Z}_{2}[u,v]$-additive code of type $(2,3;1,1;1,1,1,0)$ with standard
 generator matrix of $C$ is
 \begin{equation}\label{c}
  \begin{bmatrix}
   1&1&0&\vline&0&0&0&v&v\\
   0&u&u&\vline&0&0&0&0&uv\\
   \hline
   0&1&0&\vline&1&0&1&0&1\\
   0&0&u&\vline&0&u&u&u&u\\
   0&0&0&\vline&0&0&v&v&v\\
   0&0&0&\vline&0&0&0&0&uv
  \end{bmatrix}
 \end{equation}
 
  and $C$ has $|C|=2^{2}2^{1}2^{4}2^{3}2^{2}=2^{12}=4096$ codewords.
\end{example}

\subsection{Duality of $\mathbb{Z}_{2}[u]\mathbb{Z}_{2}[u,v]$-additive codes}
An inner product for two elements $x,y\in R_{1}^\alpha\times R_{2}^\beta$ is defined as\\
\begin{equation*}
 \langle x,y\rangle=uv\left(\sum\limits_{i=1}^{\alpha}x_{i}y_{i}\right)+\sum\limits_{j=\alpha+1}^{\alpha+\beta}x_{j}y_{j}\in 
 R_{2}.
\end{equation*}
Let $D$ be a $\mathbb{Z}_{2}[u]\mathbb{Z}_{2}[u,v]$- linear code of length $n$ of type $(\alpha,\beta;l_{0},l_{1};k_{0},k_{1},k_{2},k_{3})$. 
We denote by $C$ the corresponding additive code, i.e., $C=\Phi^{-1}(D)$. The additive dual code of $C,$ denoted by $C^{\perp},$ is defined in the 
standard way
\begin{center}
 $C^{\perp}=\{y\in R_{1}^\alpha\times R_{2}^\beta \,|\, \langle x,y\rangle=0 \,\,\text{for all}\,\, x\in C\}$
\end{center}
The corresponding binary code $\Phi(C^{\perp})$ is denoted by $C_{\perp}$ and is called the $\mathbb{Z}_{2}[u]\mathbb{Z}_{2}[u,v]$-
dual code of $C.$

The additive dual code $C^{\perp}$ also an additive code, that is a subgroup of $R_{1}^{\alpha}\times R_{2}^{\beta}.$ 
\begin{theorem}
 Let $C$ be a $\mathbb{Z}_{2}[u]\mathbb{Z}_{2}[u,v]$-linear code of type $(\alpha,\beta;l_{0},l_{1};k_{0},k_{1},k_{2},k_{3}).$ The 
 $\mathbb{Z}_{2}[u]\mathbb{Z}_{2}[u,v]$-dual code $C_{\perp}$ is then of type $(\alpha,\beta;l_{0}^{'},l_{1}^{'};k_{0}^{'},k_{1}^{'},k_{2}^{'},k_{3}^{'}),$
 where
\begin{equation*}
 \begin{split}
  l_{0}^{'}&=\alpha-l_{0}-l_{1}\\
  l_{1}^{'}&=l_{1}\\
  k_{0}^{'}&=\beta-k_{0}\\
  k_{1}^{'}&=\beta-k_{1}-k_{2}\\
  k_{2}^{'}&=k_{2}\\
  k_{3}^{'}&=\beta-k_{2}-k_{3}
 \end{split}
\end{equation*}
\end{theorem}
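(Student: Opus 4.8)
The plan is to determine the additive dual $C^{\perp}$ in two stages: first fix its order by a counting argument, then exhibit an explicit generator matrix whose block shape forces the six claimed parameters. Throughout I work with $C^{\perp}$; since $\Phi$ is a bijection, the binary image $C_{\perp}=\Phi(C^{\perp})$ has the same type, so it suffices to compute the type of $C^{\perp}$.

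I would first record that the form $\langle x,y\rangle=uv\bigl(\sum_{i=1}^{\alpha}x_iy_i\bigr)+\sum_{j}x_jy_j$ splits as an ``$R_1$-block'' carrying the factor $uv$ and an ``$R_2$-block'' equal to the ordinary dot product over $R_2$. The key step is a cardinality lemma pinning down $|C^{\perp}|$. On the last $\beta$ coordinates the pairing is the usual $R_2$-bilinear form, so the well-known duality for $R_2$-linear codes applies and contributes the parameters $k_0'=\beta-k_0$, $k_1'=\beta-k_1-k_2$, $k_2'=k_2$, $k_3'=\beta-k_2-k_3$; these are exactly the complementary pivot counts obtained by dualizing the standard form $A$ of \eqref{b}. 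On the first $\alpha$ coordinates the factor $uv$ (note $uv\cdot u=uv\cdot v=0$) governs how much of $R_1^{\alpha}$ is seen by the form, and tracking this is what must produce the drop $l_0'=\alpha-l_0-l_1$ together with the preservation $l_1'=l_1$.

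With $|C^{\perp}|$ in hand I would construct a candidate parity-check matrix $H$ of $C$, i.e.\ a generator matrix of $C^{\perp}$, in a block form $H=\begin{bmatrix} B^{*}&T^{*}\\ S^{*}&A^{*}\end{bmatrix}$ dual to the standard form $G=\begin{bmatrix} B&T\\ S&A\end{bmatrix}$ of \eqref{b}. Here $A^{*}$ is a parity-check matrix for the $R_2$-linear code generated by $A$, $B^{*}$ is a parity-check matrix for the $\mathbb{Z}_{2}[u]$-linear code generated by $B$, and the interaction blocks $S^{*},T^{*}$ are chosen to cancel the contributions of $T$ and $S$. The pivot pattern $I,uI$ of $B$ (respectively $I,uI,vI,uvI$ of $A$) then forces the complementary pattern in $H$, which is precisely what yields the stated parameters. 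I would verify directly that each row of $H$ pairs to $0$ with each row of $G$, using $u^{2}=v^{2}=0$ and remembering that scalar multiplication on the first block is twisted through $\eta(x)=x\bmod v$ as in \eqref{a}; this shows $\langle H\rangle\subseteq C^{\perp}$. Comparing $|\langle H\rangle|$, read off from the block shape of $H$, with the value of $|C^{\perp}|$ from the cardinality lemma then forces $\langle H\rangle=C^{\perp}$, and reading the type off $H$ completes the proof.

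The hard part will be the bookkeeping in the first block. Because the form carries the factor $uv$, the first $\alpha$ coordinates are not paired by a nondegenerate $R_2$-form, so $|C^{\perp}|$ cannot simply be obtained from an ``ambient order divided by $|C|$'' relation; it must be corrected by the size of the radical of the $uv$-pairing, and one has to check that this correction drops the free rank by exactly $l_0+l_1$ while leaving the $u$-torsion rank $l_1$ intact. Verifying simultaneously that the interaction blocks $S,T$ do not perturb these parameters is the delicate point of the argument. By contrast, the $R_2$-block is governed by the already-understood duality over $\mathbb{Z}_{2}[u,v]$ and should go through routinely.
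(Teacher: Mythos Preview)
Your overall strategy coincides with the paper's. The paper gives no separate proof of this theorem; it is effectively subsumed by the next result (the parity-check matrix theorem), whose proof is exactly the template you describe: write down an explicit block matrix $H$ dual to the standard form $G$, check $HG^{t}=0$, invoke a cardinality identity to conclude that $H$ generates all of $C^{\perp}$, and then read the type parameters off the pivot pattern of $H$. So steps~2--4 of your plan are precisely what the paper does.

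The one substantive difference is the counting step. The paper does not derive a cardinality lemma; it simply asserts $|C|\,|C^{\perp}|=2^{2\alpha+8\beta}$ and moves on. You, by contrast, flag that the $uv$-weighted pairing on the first $\alpha$ coordinates is degenerate and propose to track the radical explicitly to justify the size of $C^{\perp}$. That extra care is well placed: the bilinear form here is genuinely not nondegenerate on $R_1^{\alpha}\times R_2^{\beta}$, so a naive ``ambient size divided by $|C|$'' argument is not available, and the identity the paper quotes is not the order of the ambient module. Your plan to correct for this via the radical is exactly the missing justification; the paper treats it as a black box. Apart from this, your route and the paper's are the same.
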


\subsection{Parity-check matrices of  $\mathbb{Z}_{2}[u]\mathbb{Z}_{2}[u,v]$-additive codes}
As  in the classical case, the generator matrix of the dual code is important. In the following theorem, the standard form of the generator
matrix of the dual code is presented.

\begin{theorem}
 Let $C$ be a $\mathbb{Z}_{2}[u]\mathbb{Z}_{2}[u,v]$-additive code of type $(\alpha,\beta;l_{0},l_{1};k_{0},k_{1},k_{2},k_{3})$ 
 with the standard form matrix
 defined in equation (\ref{b}). Then the generator matrix for the additive dual code $C^{\perp}$ is given by
 
\begin{equation}
H=
\begin{bmatrix}
 \bar{B}&U\\
 V&\bar{A}+E
 \end{bmatrix}
 \end{equation}
 
 where
 
 \begin{equation*}
 \begin{split}
\bar{B}&=
\begin{bmatrix}
 -B_{02}^{t}+B_{12}^{t}B_{01}^{t}&-B_{12}^{t}&I_{\alpha-l_{0}-l_{1}}\\
 -uB_{01}^{t}&uI_{l_{1}}&0
 \end{bmatrix}\\ 
V&=
\begin{bmatrix}
 -T_{02}^{t}+T_{12}^{t}B_{01}^{t}+A_{34}^{t}T_{01}^{t}& -T_{12}^{t}&0\\
 -uT_{01}^{t}&0&0\\
 0&0&0\\
 0&0&0
 \end{bmatrix}\\
U&=
\begin{bmatrix}
 -uS_{12}^{t}A_{01}^{t}+u(B_{12}^{t}S_{01}^{t}-S_{02}^{t})& -uS_{12}^{t}&0&0&0\\
 -uvS_{01}^{t}&0&0&0&0
 \end{bmatrix}\\
  E&=
  \begin{bmatrix}
   T_{12}^{t}S_{01}&0&0&0&0\\
   0&0&0&0&0\\
   0&0&0&0&0\\
   0&0&0&0&0
  \end{bmatrix}\\    
\bar{A}&=
\begin{bmatrix}
l&-um&-uA_{12}^{t}&-uA_{23}^{t}&uI_{\beta-k_{1}-k_{2}}&0\\
 -vp&-vA_{12}^{t}&vI_{k_{2}}&0&0\\
-uvA_{01}^{t}&uvI_{\beta-k_{2}-k_{3}}&0&0&0
 \end{bmatrix}
 \end{split}
 \end{equation*}
 
 where 
 
$ l=-A_{04}^{t}+A_{01}^{t}(A_{14}^{t}+A_{13}^{t}A_{34}^{t}+A_{24}^{t}A_{12}^{t}+A_{12}^{t}A_{23}^{t}A_{34}^{t})+A_{02}^{t}(A_{24}^{t}+A_{23}^{t}A_{34}^{t})+
 A_{03}^{t}A_{34}^{t}$\\

 $k=A_{14}^{t}+A_{13}^{t}A_{34}^{t}+A_{24}^{t}A_{12}^{t}+A_{12}^{t}A_{23}^{t}A_{34}^{t}$\\
 
 $n=A_{24}^{t}+A_{23}^{t}A_{34}^{t}$\\
 
 $m=A_{03}^{t}+A_{02}^{t}A_{23}^{t}+A_{01}^{t}A_{12}^{t}$\\
 
 $p=A_{02}^{t}+A_{01}^{t}A_{12}^{t}$

\end{theorem}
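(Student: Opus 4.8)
The plan is to establish the two facts that make $H$ a generator matrix of $C^{\perp}$: that the row span of $H$ is orthogonal to $C$ (hence contained in $C^{\perp}$), and that this row span has exactly the cardinality of $C^{\perp}$ given by the preceding theorem on the dual type (hence equals it). I would split every row of $G$ and of $H$ into its first $\alpha$ coordinates (valued in $R_{1}$) and its last $\beta$ coordinates (valued in $R_{2}$), so that the inner product $\langle x,y\rangle=uv\sum_{i=1}^{\alpha}x_{i}y_{i}+\sum_{j=\alpha+1}^{\alpha+\beta}x_{j}y_{j}$ of a $G$-row with an $H$-row equals $uv\,(\text{first-part product})+(\text{last-part product})$. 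Collecting all such products into a single matrix, the condition that every row of $H$ be orthogonal to every row of $G$ becomes the four block identities $uvB\bar{B}^{t}+TU^{t}=0$, $uvBV^{t}+T(\bar{A}+E)^{t}=0$, $uvS\bar{B}^{t}+AU^{t}=0$, and $uvSV^{t}+A(\bar{A}+E)^{t}=0$, which I would verify one block at a time.

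The cleanest of these is the purely $R_{2}$ interaction inside $A(\bar{A})^{t}$. Here $A$ is exactly the standard-form generator of a linear code over the chain ring $R_{2}$ given in equation (2.2), and $\bar{A}$ is designed to be its parity-check matrix. I would construct $\bar{A}$ by the usual back-substitution for chain-ring codes: force $A\bar{A}^{t}=0$ by eliminating the pivots $I_{k_{0}},uI_{k_{1}},vI_{k_{2}},uvI_{k_{3}}$ in turn against the rows beneath them. Because one works with transposes, $(XY)^{t}=Y^{t}X^{t}$ makes the entries of $\bar{A}$ appear as \emph{reversed} products of the $A_{ij}$, and each of the auxiliary matrices $l,k,n,m,p$ is precisely the coefficient accumulated when a pivot column is cleared; for instance $p=A_{02}^{t}+A_{01}^{t}A_{12}^{t}$ is the residue collected at the $v$-level and $l$ is the full constant-level residue. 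The relations $u^{2}=v^{2}=0$ and $uv\cdot u=0$ annihilate the higher-order products, which is what keeps these expressions finite.

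The substantive work is the coupling between the two alphabets. Since $uv\cdot r=uv\cdot(r\bmod u)$ for any $r\in R_{1}$, the weight $uv$ on the first $\alpha$ coordinates collapses each first-part product to its $\mathbb{Z}_{2}$-constant term; this is exactly why $\bar{B}$ has the displayed shape, with $I_{\alpha-l_{0}-l_{1}}$ and $uI_{l_{1}}$ acting as a parity-check matrix for $B$ relative to the degenerate form, consistent with $l_{0}'=\alpha-l_{0}-l_{1}$ and $l_{1}'=l_{1}$. I would then check the off-diagonal blocks: in $uvS\bar{B}^{t}+AU^{t}=0$ the block $U$ is chosen so that the $R_{2}$-term $AU^{t}$ cancels the surviving first-part residue $uvS\bar{B}^{t}$, and similarly in $uvBV^{t}+T(\bar{A}+E)^{t}=0$ the block $V$ makes $T(\bar{A}+E)^{t}$ cancel $uvBV^{t}$. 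The final block $uvSV^{t}+A(\bar{A}+E)^{t}=0$ is where the correction $E$ enters: once $A\bar{A}^{t}$ has been arranged to vanish, a single cross-term $T_{12}^{t}S_{01}$ survives, produced by the interaction of the $uv$-row of $T$ with the $u$-row of $S$ through the $uv$-weighted form, and $E$, whose only nonzero entry is $T_{12}^{t}S_{01}$ in its top-left block, is defined precisely to absorb it. I expect this cross-term bookkeeping — tracking which products survive multiplication by $u$, $v$, or $uv$, and confirming that $E$ is the unique correction clearing the residue — to be the main obstacle.

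Finally, to promote the inclusion $\langle H\rangle\subseteq C^{\perp}$ to an equality, I would read the type of $\langle H\rangle$ off the pivot blocks of $H$: the free block $I_{\alpha-l_{0}-l_{1}}$ of $\bar{B}$ together with the free block of $\bar{A}$ contribute the $l_{0}'$ and $k_{0}'$ free generators, while $uI_{l_{1}}$ in $\bar{B}$ and $uI_{\beta-k_{1}-k_{2}}$, $vI_{k_{2}}$, $uvI_{\beta-k_{2}-k_{3}}$ in $\bar{A}$ contribute the remaining torsion generators. These exhibit $\langle H\rangle$ as a code of type $(\alpha,\beta;l_{0}',l_{1}';k_{0}',k_{1}',k_{2}',k_{3}')$ with $l_{0}'=\alpha-l_{0}-l_{1}$, $l_{1}'=l_{1}$, $k_{0}'=\beta-k_{0}$, $k_{1}'=\beta-k_{1}-k_{2}$, $k_{2}'=k_{2}$, $k_{3}'=\beta-k_{2}-k_{3}$, which is exactly the dual type computed in the preceding theorem. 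Hence $|\langle H\rangle|=|C^{\perp}|$, and combined with the orthogonality this forces $\langle H\rangle=C^{\perp}$, so $H$ generates the additive dual code.
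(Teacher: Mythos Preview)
Your approach is the same as the paper's: verify orthogonality via $HG^{t}=0$ (with the $uv$-weighted inner product on the first $\alpha$ coordinates) and then count by reading the type of $\langle H\rangle$ to match $|C^{\perp}|$. The paper's own proof is in fact a two-line sketch of precisely this strategy --- it asserts that $HG^{t}=0$ is ``easy to check'' and invokes $|C||C^{\perp}|=2^{2\alpha+8\beta}$ together with the type of $H$ --- so your proposal is a faithful and considerably more detailed expansion of it, including the block decomposition and the explicit identification of the cross-term absorbed by $E$.
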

\begin{proof}
 It is easy to check that $HG^{t}=0.$ Besides, since we have $|C||C^{\perp}|=2^{2\alpha+8\beta}$ by checking the type of the matrix $H$ we
 conclude that the rows of $H$ not only are orthogonal to the rows of $G$ but also generate the all code $C^{\perp}.$ Hence, $H$ is the desired matrix. 
\end{proof}

\begin{example}
 Let $C$ be a $\mathbb{Z}_{2}[u]\mathbb{Z}_{2}[u,v]$- additive code of type $(2,3;1,1;1,1,1,0)$ with the generator matrix given in Equation (\ref{c}).
 Then, we can write the parity-check matrix of $C$  as follows:
 \begin{equation*}
H=
 \begin{bmatrix}
  1&1&1&\vline&u&u&0&0&0\\
  u&u&0&\vline&uv&0&0&0&0\\
  \hline
  1&1&0&\vline&0&0&0&1&1\\
  u&0&0&\vline&u&u&u&u&0\\
  0&0&0&\vline&v&v&v&0&0\\
  0&0&0&\vline&0&uv&0&0&0
 \end{bmatrix}.
\end{equation*}
It is clear that is of type $(2,3;0,1;2,1,1,2)$ and has $|C^{\perp}|=2^{1}2^{8}2^{3}2^{2}2^{2}=2^{16}=65536$ codewords.
\end{example}

\section{The Structure of $\mathbb{Z}_{2}[u]\mathbb{Z}_{2}[u,v]$-additive cyclic code}
In this section, we introduce the definition of a additive cyclic code and some algebraic structure.

Let $R_{\alpha,\beta}[x]=\dfrac{R_{1}[x]}{<x^{\alpha}-1>}\times \dfrac{R_{2}[x]}{<x^{\beta}-1>}.$

A code $C$ is cyclic if and only if
its polynomial representation is an ideal. 

A additive code $C$ is called a \textbf{$\mathbb{Z}_{2}[u]\mathbb{Z}_{2}[u,v]$-additive cyclic code} if any cyclic shift of a codeword is 
also a codeword. 

i.e.,
 $(a_{0},a_{1},\cdots,a_{\alpha-1},b_{0},b_{1},\cdots,b_{\beta-1})\in C $ 
 
$\Rightarrow (a_{\alpha-1},a_{0},
 \cdots,a_{\alpha-2},b_{\beta-1},b_{0},\cdots,b_{\beta-2})\in C.$

 \begin{theorem}
 If $C$ is any $\mathbb{Z}_{2}[u]\mathbb{Z}_{2}[u,v]$-additive cyclic code, then $C^{\perp}$ is also cyclic.
\end{theorem}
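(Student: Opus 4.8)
The plan is to show that the cyclic-shift operator preserves the dual code, which is the standard route for this kind of statement. Let $\sigma$ denote the simultaneous cyclic shift on $R_{1}^{\alpha}\times R_{2}^{\beta}$, i.e. $\sigma(a_{0},\dots,a_{\alpha-1},b_{0},\dots,b_{\beta-1})=(a_{\alpha-1},a_{0},\dots,a_{\alpha-2},b_{\beta-1},b_{0},\dots,b_{\beta-2})$. By definition, $C$ being additive cyclic means $\sigma(C)\subseteq C$, and since $\sigma$ is a bijection of finite order on the finite set $C$ we in fact have $\sigma(C)=C$. The goal is to prove $\sigma(C^{\perp})\subseteq C^{\perp}$, and by the same finiteness argument this gives $\sigma(C^{\perp})=C^{\perp}$, i.e. $C^{\perp}$ is cyclic.

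First I would record the key compatibility between $\sigma$ and the inner product $\langle\,,\,\rangle$ defined earlier. The essential observation is that applying the shift to both arguments leaves the inner product unchanged: $\langle\sigma(x),\sigma(y)\rangle=\langle x,y\rangle$ for all $x,y\in R_{1}^{\alpha}\times R_{2}^{\beta}$. This holds because the inner product is a sum over the coordinates (with the scalar $uv$ attached uniformly to the first $\alpha$ terms), and a cyclic shift merely permutes the index set of each of the two sums separately while keeping the $R_1$-block and the $R_2$-block distinct; a reindexing of the summation then recovers the original value. I would state this as a short lemma-style computation, being careful that the shift on the $\alpha$-part and the $\beta$-part are handled independently so the split form of $\langle\,,\,\rangle$ is respected.

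Given this invariance, the main argument is short. Let $y\in C^{\perp}$; I must show $\sigma(y)\in C^{\perp}$, i.e. $\langle x,\sigma(y)\rangle=0$ for every $x\in C$. Since $\sigma$ is invertible on the coordinate module, write $x=\sigma(x')$ where $x'=\sigma^{-1}(x)$. Because $\sigma(C)=C$ (equivalently $\sigma^{-1}(C)=C$), we have $x'\in C$. Then by the invariance of the inner product, $\langle x,\sigma(y)\rangle=\langle\sigma(x'),\sigma(y)\rangle=\langle x',y\rangle=0$, the last equality holding because $x'\in C$ and $y\in C^{\perp}$. As $x\in C$ was arbitrary, $\sigma(y)\in C^{\perp}$, completing the inclusion.

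The only genuine subtlety — and where I would spend the most care — is the interaction of the shift with the twisted scalar multiplication from Eq.\eqref{a}, since $C^{\perp}$ must itself be an $R_2$-submodule under that nonstandard action (via the homomorphism $\eta$), not merely a shift-closed subgroup. One must confirm that $\sigma$ commutes with this scalar action, i.e. $\sigma(l\cdot y)=l\cdot\sigma(y)$ for $l\in R_2$; this is immediate because $\eta(l)$ is applied coordinatewise to the first $\alpha$ entries and $l$ coordinatewise to the last $\beta$ entries, and a cyclic shift within each block does not disturb which scalar multiplies which coordinate. With that checked, the subgroup-level argument above upgrades to the submodule level, and the finiteness of $C^{\perp}$ promotes the inclusion $\sigma(C^{\perp})\subseteq C^{\perp}$ to equality, so $C^{\perp}$ is additive cyclic.
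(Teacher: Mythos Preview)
Your proof is correct and follows essentially the same route as the paper's: both hinge on the identity $\langle \sigma(x),\sigma(y)\rangle=\langle x,y\rangle$ together with shift-closure of $C$, the paper carrying this out by an explicit coordinate computation and writing the inverse shift as $\sigma^{j-1}$ with $j=\operatorname{lcm}(\alpha,\beta)$ rather than invoking $\sigma^{-1}$ via finiteness. Your extra check that $\sigma$ commutes with the twisted $R_2$-scalar action is more careful than the paper (which has already recorded that $C^{\perp}$ is additive), but the core argument is the same.
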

\begin{proof}
 Let $C$ be any $\mathbb{Z}_{2}[u]\mathbb{Z}_{2}[u,v]$-additive cyclic code.
 
 Suppose
 $x=(a_{0},a_{1},\cdots,a_{\alpha-1},b_{0},b_{1},\cdots,b_{\beta-1})\in C^{\perp},$ 
 
 for any codeword 
 
 $y=(d_{0},d_{1},\cdots,d_{\alpha-1},e_{0},e_{1},\cdots,e_{\beta-1})\in C$
 
 we have
 
 $\langle x,y\rangle=uv\left(\sum\limits_{i=0}^{\alpha-1}a_{i}d_{i}\right)+\sum\limits_{j=0}^{\beta-1}b_{j}e_{j}=0.$
 
 Let $S$ is a cyclic shift, and $j=lcm(\alpha,\beta).$ Then we have
 
 $S(x)=(a_{\alpha-1},a_{0},
 \cdots,a_{\alpha-2},b_{\beta-1},b_{0},\cdots,b_{\beta-2})$ and $S^{j}(y)=y$ for any $y\in C.$
 
 Since $C$ be any 
 $\mathbb{Z}_{2}[u]\mathbb{Z}_{2}[u,v]$-additive cyclic code, So we have
 
 $S^{j-1}(y)=(d_{1},d_{2},\cdots,d_{\alpha-1},d_{0},e_{1},e_{2},\cdots,e_{\beta-1},e_{0})\in C.$
 
 Hence
 \begin{equation*}
  \begin{split}
   0&=\langle x, S^{j-1}(y)\rangle\\
   &=uv(a_{0}d_{1}+a_{1}d_{2}+\cdots+a_{\alpha-2}d_{\alpha-1}+a_{\alpha-1}d_{0})\\
   &+(b_{0}e_{1}+b_{1}e_{2}+\cdots b_{\beta-2}e_{\beta-1}+b_{\beta-1}e_{0})\\
   &=uv(a_{\alpha-1}d_{0}+a_{0}d_{1}+a_{1}d_{2}+\cdots+a_{\alpha-2}d_{\alpha-1})\\
   &+(b_{\beta-1}e_{0}+b_{0}e_{1}+b_{1}e_{2}+\cdots b_{\beta-2}e_{\beta-1})\\
   &=\langle S(x),y\rangle
  \end{split}
 \end{equation*}
Therefore, we have $S(x)\in C^{\perp},$ so $C^{\perp}$ is a cyclic code.
\end{proof}

Let $C$ be a $\mathbb{Z}_{2}[u]\mathbb{Z}_{2}[u,v]$-additive cyclic code, for any codeword 

$c=(a_{0},a_{1},\cdots,a_{\alpha-1},b_{0},b_{1},\cdots,b_{\beta-1})\in C$ can be representation with a polynomial, that is,

$c(x)=(a(x),b(x))\in R_{\alpha,\beta}[x]$

Similarly, we introduce a new scalar multiplication. Now, we have the following scalar multiplication:

for $c_{1}(x)=(a_{1}(x),b_{1}(x))\in R_{\alpha,\beta}[x]$ and $h(x)=p(x)+uq(x)+vr(x)+ uv s(x)\in R_{2}[x],$
define

$h(x)c_{1}(x)=((p(x)+uq(x))a_{1}(x),h(x)b_{1}(x)).$

Now, we define the homomorphism mapping:
$$\Psi: R_{\alpha,\beta}[x]\rightarrow R_{2}[x]$$

$$\Psi(c(x))=\Psi(a(x),b(x))=b(x).$$

It is clear that Image($\Psi$) is an ideal in the ring $\dfrac{R_{2}[x]}{<x^{\beta}-1>}$ and ker($\Psi$) is also an ideal
in $\dfrac{R_{1}[x]}{<x^{\alpha}-1>}.$
And note that

$Image(\Psi)=\left\{b(x)\in R_{2}[x] : (a(x), b(x))\in R_{\alpha,\beta}[x]\right\}$

$ker(\Psi)=\left\{(a(x),0)\in R_{\alpha,\beta}[x] : a(x)\in \dfrac{R_{1}[x]}{<x^{\alpha}-1>} \right\}.$

We have

$Image(\Psi)=<g(x)+up_{1}(x)+vp_{2}(x)+uvp_{3}(x),ua_{1}(x)+vq_{1}(x)+uvq_{2}(x), va_{2}(x)+uvr_{1}(x),uva_{3}(x)>$

where $g(x),p_{1}(x),p_{2}(x),p_{3}(x),a_{1}(x),q_{1}(x),q_{2}(x),
a_{2}(x),r_{1}(x),a_{3}(x)\in \dfrac{R_{2}[x]}{<x^{\beta}-1>} $ and 

$a_{3}(x)\,|\,a_{2}(x)|a_{1}(x)\,|\,g(x)\,|\,x^{\beta}-1 \,\, mod\,\, 2,$

$ a_{1}(x)\,|\,p_{1}(x)\left(\dfrac{x^{\beta}-1}{g(x)}\right),$

$a_{2}(x)\,|\,q_{1}(x)\left(\dfrac{x^{\beta}-1}{a_{1}(x)}\right),$ 

$ a_{2}(x)\,|\,p_{2}(x)\left(\dfrac{x^{\beta}-1}{g(x)}\right)\left(\dfrac{x^{\beta}-1}{a_{1}(x)}\right),$

$a_{3}(x)\,|\,r_{1}(x)\left(\dfrac{x^{\beta}-1}{a_{2}(x)}\right),$

$a_{3}(x)\,|\,q_{2}(x)\left(\dfrac{x^{\beta}-1}{q_{1}(x)}\right)\left(\dfrac{x^{\beta}-1}{a_{1}(x)}\right).$

$a_{3}(x)\,|\,p_{3}(x)\left(\dfrac{x^{\beta}-1}{g(x)}\right)\left(\dfrac{x^{\beta}-1}{a_{2}(x)}\right)\left(\dfrac{x^{\beta}-1}{a_{1}(x)}\right).$

Similarly, 
$$ker(\Psi)=<((f(x)+ul(x),ub(x)),0)>$$

where
$f(x),l(x),b(x)\in \dfrac{R_{1}[x]}{<x^{\alpha}-1>}$ and

$b(x)\,|\,f(x)\,|\,x^{\alpha}-1$, 
$b(x)\,|\,l(x)\left(\dfrac{x^{\alpha}-1}{f(x)}\right)$. 

According to the fundamendal theorem of homomorphism, we have

 $\dfrac{C}{ker(\Psi)}\cong <g(x)+up_{1}(x)+vp_{2}(x)+uvp_{3}(x), ua_{1}(x)+vq_{1}(x)+uvq_{2}(x),
va_{2}(x)+uvr_{1}(x),uva_{3}(x)>.$

Hence, we have 
$((h(x)+ut(x),uc(x)),(g(x)+up_{1}(x)+vp_{2}(x)+uvp_{3}(x),ua_{1}(x)+vq_{1}(x)+uvq_{2}(x), va_{2}(x)+uvr_{1}(x),uva_{3}(x)))\in C,$

where $\Psi((h(x)+ut(x),uc(x)),(g(x)+up_{1}(x)+vp_{2}(x)+uvp_{3}(x),ua_{1}(x)+vq_{1}(x)+uvq_{2}(x), va_{2}(x)+uvr_{1}(x),uva_{3}(x)))=
(g(x)+up_{1}(x)+vp_{2}(x)+uvp_{3}(x),ua_{1}(x)+vq_{1}(x)+uvq_{2}(x), va_{2}(x)+uvr_{1}(x),uva_{3}(x)).$

By these discussion, it is easy to see that any $\mathbb{Z}_{2}[u]\mathbb{Z}_{2}[u,v]$-additive cyclic code can be generated by two elements of the 
form $((h(x)+ut(x),uc(x)),(g(x)+up_{1}(x)+vp_{2}(x)+uvp_{3}(x),ua_{1}(x)+vq_{1}(x)+uvq_{2}(x), va_{2}(x)+uvr_{1}(x),uva_{3}(x)))$ and
$((f(x)+ul(x),ub(x)),0).$

\begin{corollary}
 Let $C$ be a $\mathbb{Z}_{2}[u]\mathbb{Z}_{2}[u,v]$-additive cyclic code. Then $C$ is an ideal in $R_{\alpha,\beta}[x]$ which can be generated by
 
  $C=(((f(x)+ul(x),ub(x)),0),((h(x)+ut(x),uc(x)),  (g(x)+up_{1}(x)+vp_{2}(x)+uvp_{3}(x),
  ua_{1}(x)+vq_{1}(x)+uvq_{2}(x),va_{2}(x)+uvr_{1}(x),uva_{3}(x)))),$
  
 where 
$a_{3}(x)\,|\,a_{2}(x)\,|\,a_{1}(x)\,|\,g(x)\,|\,x^{\beta}-1 \,\, mod\,\, 2,$

$ a_{1}(x)\,|\,p_{1}(x)\left(\dfrac{x^{\beta}-1}{g(x)}\right),$

$a_{2}(x)\,|\,q_{1}(x)\left(\dfrac{x^{\beta}-1}{a_{1}(x)}\right),$ 

$ a_{2}(x)\,|\,p_{2}(x)\left(\dfrac{x^{\beta}-1}{g(x)}\right)\left(\dfrac{x^{\beta}-1}{a_{1}(x)}\right),$

$a_{3}(x)\,|\,r_{1}(x)\left(\dfrac{x^{\beta}-1}{a_{2}(x)}\right),$

$a_{3}(x)\,|\,q_{2}(x)\left(\dfrac{x^{\beta}-1}{q_{1}(x)}\right)\left(\dfrac{x^{\beta}-1}{a_{1}(x)}\right).$

$a_{3}(x)\,|\,p_{3}(x)\left(\dfrac{x^{\beta}-1}{g(x)}\right)\left(\dfrac{x^{\beta}-1}{a_{2}(x)}\right)\left(\dfrac{x^{\beta}-1}{a_{1}(x)}\right),$

 $b(x)\,|\,f(x)|x^{\alpha}-1$, $b(x)\,|\,l(x)\left(\dfrac{x^{\alpha}-1}{f(x)}\right),$
 
 $c(x)\,|\,h(x)\,|\,x^{\alpha}-1$ and
 $c(x)\,|\,t(x)\left(\dfrac{x^{\alpha}-1}{h(x)}\right).$
\end{corollary}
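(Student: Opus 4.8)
The plan is to realize $C$ as an ideal of $R_{\alpha,\beta}[x]$ and then recover a generating set from the two ideals $\ker(\Psi)$ and $\mathrm{Image}(\Psi)$ already identified in the discussion preceding the corollary. First I would record the standard dictionary between codes and polynomials: under the representation $c \mapsto (a(x),b(x))$ the cyclic shift corresponds to multiplication by $x$, and the polynomial scalar multiplication $h(x)c_{1}(x)$ defined above makes $R_{\alpha,\beta}[x]$ into a ring in which closure under the additive structure together with shift-invariance is precisely the condition of being an ideal. Thus ``$C$ is additive cyclic'' is equivalent to ``$C$ is an ideal of $R_{\alpha,\beta}[x]$'', which gives the first assertion.

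For the generators I would use the short exact sequence $0 \to \ker(\Psi) \to C \xrightarrow{\ \Psi\ } \mathrm{Image}(\Psi) \to 0$ furnished by the homomorphism $\Psi$. Both end terms are ideals, as already noted, so it suffices to describe generators of each and then lift the generators of $\mathrm{Image}(\Psi)$ back into $C$. For $\mathrm{Image}(\Psi)$, an ideal of $R_{2}[x]/\langle x^{\beta}-1\rangle$, I would apply the structure theory of cyclic codes over $R_{2}=\mathbb{Z}_{2}[u,v]$: reducing successively modulo $u$, $v$, and $uv$ produces the four generators $g+up_{1}+vp_{2}+uvp_{3}$, $ua_{1}+vq_{1}+uvq_{2}$, $va_{2}+uvr_{1}$, and $uva_{3}$, while the demand that their span be closed under multiplication by $x$ and by the units of $R_{2}$ forces the nested divisibilities $a_{3}\mid a_{2}\mid a_{1}\mid g\mid x^{\beta}-1$ together with the mixed conditions on $p_{1},p_{2},p_{3},q_{1},q_{2},r_{1}$. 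For $\ker(\Psi)$, which is a cyclic code over the chain ring $R_{1}=\mathbb{Z}_{2}[u]$, the simpler chain-ring theory gives the generators $f+ul$ and $ub$ subject to $b\mid f\mid x^{\alpha}-1$ and $b\mid l\,(x^{\alpha}-1)/f$.

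Finally I would lift. Choosing any $\Psi$-preimage in $C$ of the element generating $\mathrm{Image}(\Psi)$ yields a codeword whose $R_{2}$-part is the quadruple $(g+up_{1}+vp_{2}+uvp_{3},\,ua_{1}+vq_{1}+uvq_{2},\,va_{2}+uvr_{1},\,uva_{3})$ and whose $R_{1}$-part may be taken in the reduced form $(h+ut,\,uc)$; adjoining the generator $((f+ul,\,ub),0)$ of $\ker(\Psi)$ then produces a generating pair for all of $C$ by the exactness of the sequence. Collecting the divisibility relations obtained in the two structure steps yields exactly the stated list.

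The main obstacle is the image step, because $R_{2}=\mathbb{Z}_{2}[u,v]$ is \emph{not} a chain ring: the ideals $(u)$ and $(v)$ are incomparable, so its cyclic codes do not collapse to a single generator and one must carry four generators at once. Proving both the necessity and the sufficiency of the full web of mixed divisibility conditions, i.e.\ that they are exactly what is required for the span to be an ideal and not merely a submodule, and checking that the chosen lift of the image generators is compatible with the kernel generators so that no spurious relations appear, is where the genuine work lies; the remainder is bookkeeping organized by the first isomorphism theorem.
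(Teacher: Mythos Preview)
Your proposal is correct and follows essentially the same route as the paper: the corollary is stated immediately after, and as a summary of, the discussion that introduces $\Psi$, identifies $\ker(\Psi)$ and $\mathrm{Image}(\Psi)$ as ideals over $R_{1}[x]/\langle x^{\alpha}-1\rangle$ and $R_{2}[x]/\langle x^{\beta}-1\rangle$ with the listed generators and divisibility constraints, invokes the first isomorphism theorem, and lifts the image generators back into $C$. Your write-up is in fact more explicit than the paper's about the exact-sequence framing and, notably, about the genuine difficulty that $R_{2}$ is not a chain ring (the paper asserts otherwise), which is precisely why four generators rather than one are needed on the image side.
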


\section{$(1+u,1+uv)$-additive constacyclic codes over $\mathbb{Z}_{2}[u]\mathbb{Z}_{2}[u,v]$}
 In this section, we introduce the definition of a additive constacyclic code and some algebraic structure.

 For fixed units $1+u \in R_{1}$ and $1+uv\in R_{2}$. The shift operator $\tau$ 
 on $R_{1}^{\alpha}\times R_{2}^{\beta}$ is defined to be
 
 \begin{equation*}
 \tau(a_{0},a_{1},\cdots,a_{\alpha-1},b_{0},b_{1},\cdots,b_{\beta-1})=\\((1+u)a_{\alpha-1},a_{0},\cdots,a_{\alpha-2}, (1+uv)b_{\beta-1},
 b_{0},\cdots,b_{\beta-2}).
 \end{equation*}

 A $\mathbb{Z}_{2}[u]\mathbb{Z}_{2}[u,v]$-additive code $C$ is said to be \textbf{$(1+u,1+uv)$-additive constacyclic code}
 if the code is inveriant under the
  shift $\tau,$ that is $\tau(C)=C.$

 Let $\sigma$ be the cyclic shift. For any positive integer $s$, let $\sigma_{s}$ be the quasi-shift given by
 
 $\sigma_{s}(a^{(1)}|a^{(2)}|\cdots|a^{(s)})=\sigma(a^{(1)})|\sigma(a^{(2)})|\cdots|\sigma(a^{(s)}),$
 
 where$a^{(1)},a^{(2)},\cdots,a^{(s)}\in \mathbb{Z}_{2}^{n}$ and $ | $ denotes the usual vector concatenation. 
 A binary quasi-cyclic code $D$ of index $s$ and lenght $n$ is a subset of $\mathbb{Z}_{2}^{n}$ such that $\sigma_{s}(D)=D.$
 \begin{lemma}\label{d}
  Let $\Psi$ be a Gray map defined as above and let $\tau$ be the $(1+u,1+uv)$-constacyclic shift on 
  $R_{1}^{\alpha}\times R_{2}^{\beta}.$ Then $\Psi \tau=\sigma_{2}\Psi.$
 \end{lemma}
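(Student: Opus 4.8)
The plan is to establish the operator identity $\Psi\tau=\sigma_{2}\Psi$ (reading $\Psi$ as the Gray map $\Phi$ of Section III) by evaluating both sides on an arbitrary element of $R_{1}^{\alpha}\times R_{2}^{\beta}$ and matching the resulting binary strings coordinate by coordinate. The first observation I would exploit is that the Gray map factors as $\Psi(x,y)=(\chi_{L}(x),\phi_{L}(y))$, acting on the $R_{1}$-block and the $R_{2}$-block independently, and that $\tau$ likewise preserves these two blocks, shifting the first $\alpha$ coordinates by the $(1+u)$-rule and the last $\beta$ coordinates by the $(1+uv)$-rule. Writing $\tau_{1},\tau_{2}$ for these two induced shifts and letting $\sigma_{2}$ act as the binary cyclic shift on each of the two component images, the claim splits into the two sub-identities $\chi_{L}\tau_{1}=\sigma\chi_{L}$ on the length-$2\alpha$ part and $\phi_{L}\tau_{2}=\sigma\phi_{L}$ on the length-$8\beta$ part, one per alphabet.

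For the $R_{1}$-block I would put $x_{i}=r_{i}+uq_{i}$ and use $u^{2}=0$ to simplify the single twisted entry, namely $(1+u)x_{\alpha-1}=r_{\alpha-1}+u(q_{\alpha-1}+r_{\alpha-1})$. Feeding the shifted vector into $\chi_{L}(\bar a+u\bar b)=(\bar b,\bar a\oplus\bar b)$ and cancelling the repeated summand via $r\oplus q\oplus r=q$, one sees that the extra term contributed by the unit is exactly what carries the wrapped coordinate into its correct slot, so that $\chi_{L}(\tau_{1}x)$ is precisely the cyclic shift of the $2\alpha$-tuple $\chi_{L}(x)$. This half is short and amounts to bookkeeping.

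The substantive part is the $R_{2}$-block, where I would write $y_{j}=a_{j}+ub_{j}+vc_{j}+uvd_{j}$ and record the analogous simplification $(1+uv)y_{\beta-1}=a_{\beta-1}+ub_{\beta-1}+vc_{\beta-1}+uv(d_{\beta-1}+a_{\beta-1})$, which follows from $u^{2}=v^{2}=0$ and $uv=vu$; the key point is that the unit perturbs only the $uv$-component, replacing $d_{\beta-1}$ by $d_{\beta-1}+a_{\beta-1}$. I would then substitute these components into all eight $\mathbb{Z}_{2}$-linear forms defining $\phi_{L}$ and compare, block by block, with the shift of $\phi_{L}(y)$. The main obstacle I anticipate is exactly this eight-block bookkeeping: the tail of each block shifts cleanly, but the eight wrapped entries are redistributed among the blocks, and one must verify that the single correction $d_{\beta-1}\mapsto d_{\beta-1}+a_{\beta-1}$ reproduces precisely the permutation prescribed by $\sigma_{2}$ — concretely, I expect it to pair the eight Gray blocks and shift each pair, the pairing being forced by the $uv$-correction rather than by any deep algebra. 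Once both sub-identities are verified, concatenating the two images yields $\Psi\tau=\sigma_{2}\Psi$ on an arbitrary element, and since the element was arbitrary the operator identity follows.
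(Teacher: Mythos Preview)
Your plan is essentially the paper's own proof: write $x_{i}=r_{i}+uq_{i}$, $y_{j}=a_{j}+ub_{j}+vc_{j}+uvd_{j}$, simplify $(1+u)x_{\alpha-1}=r_{\alpha-1}+u(q_{\alpha-1}+r_{\alpha-1})$ and $(1+uv)y_{\beta-1}=a_{\beta-1}+ub_{\beta-1}+vc_{\beta-1}+uv(d_{\beta-1}+a_{\beta-1})$, then compare $\Psi\tau(w)$ with $\sigma_{2}\Psi(w)$ coordinate by coordinate; your extra step of factoring $\Psi=(\chi_{L},\phi_{L})$ and $\tau=(\tau_{1},\tau_{2})$ is a clean way to organize the same bookkeeping. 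One caution: your intermediate statement $\phi_{L}\tau_{2}=\sigma\phi_{L}$ (a single cyclic shift of the whole length-$8\beta$ string) is not the right identity---as you yourself correct later, $\sigma_{2}$ on the $R_{2}$ image shifts each of the four length-$2\beta$ pairs separately, and that paired action is exactly what the $uv$-correction produces.
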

 \begin{proof}
  Let $w=(x,y)=(x_{0},x_{1},\cdots,x_{\alpha-1},y_{0},y_{1},\cdots,y_{\beta-1})\in R_{1}^{\alpha}\times R_{2}^{\beta}.$
  
  Let $x_{i}=r_{i}+uq_{i}$ and $y_{j}=a_{j}+ub_{j}+vc_{j}+uvd_{j}$ where $x=(x_{0},x_{1},\cdots,x_{\alpha-1})$ and
  $y=(y_{0},y_{1},\cdots,y_{\beta-1})$ for 
  $0\leq i\leq \alpha-1,\,\, 0\leq j\leq \beta-1.$ From definitions, we have

  \begin{equation*}
 \begin{split}
  \Phi(w)&=(q_{0},q_{1},\cdots,q_{\alpha-1}, q_{0}+r_{0},q_{1}+r_{1},\cdots,q_{\alpha-1}+r_{\alpha-1},\\
  &d_{0},d_{1},\cdots,d_{\beta-1}, a_{0}+d_{0},a_{1}+d_{1},\cdots,a_{\beta-1}+d_{\beta-1},\\
 &b_{0}+d_{0},b_{1}+d_{1},\cdots,b_{\beta-1}+d_{\beta-1},\\
    & a_{0}+b_{0}+d_{0},a_{1}+b_{1}+d_{1},\cdots,a_{\beta-1}+b_{\beta-1}+d_{\beta-1},\\
  &c_{0}+d_{0},c_{1}+d_{1},\cdots,c_{\beta-1}+d_{\beta-1},\\
  &a_{0}+c_{0}+d_{0},a_{1}+c_{1}+d_{1},\cdots,a_{\beta-1}+c_{\beta-1}+d_{\beta-1},\\
  &b_{0}+c_{0}+d_{0},b_{1}+c_{1}+d_{1},\cdots,b_{\beta-1}+c_{\beta-1}+d_{\beta-1},\\ 
  &a_{0}+b_{0}+c_{0}+d_{0},\cdots,a_{\beta-1}+b_{\beta-1}+c_{\beta-1}+d_{\beta-1}),  
 \end{split}
 \end{equation*}
 hence,
 \begin{equation*}
  \begin{split}
 \sigma_{2}\Psi(w)&=(q_{\alpha-1}+r_{\alpha-1},q_{0},q_{1},\cdots,q_{\alpha-1},\\
 &q_{0}+r_{0},q_{1}+r_{1},\cdots,q_{\alpha-2}+r_{\alpha-2},\\
 & a_{\beta-1}+d_{\beta-1},d_{0},d_{1},\cdots,d_{\beta-1},\\
 &a_{0}+d_{0},a_{1}+d_{1},\cdots,a_{\beta-2}+d_{\beta-2},\\
 &a_{\beta-1}+b_{\beta-1}+d_{\beta-1},b_{0}+d_{0},b_{1}+d_{1},\cdots,\\&b_{\beta-1}+d_{\beta-1},
 a_{0}+b_{0}+d_{0},a_{1}+b_{1}+d_{1},\cdots,\\&a_{\beta-2}+b_{\beta-2}+d_{\beta-2},
 a_{\beta-1}+c_{\beta-1}+d_{\beta-1},c_{0}+d_{0},\\&c_{1}+d_{1},\cdots,c_{\beta-1}+d_{\beta-1},\\
 & a_{0}+c_{0}+d_{0},a_{1}+c_{1}+d_{1},\cdots,a_{\beta-2}+c_{\beta-2}+d_{\beta-2}\\
 &a_{\beta-1}+b_{\beta-1}+c_{\beta-1}+d_{\beta-1},b_{0}+c_{0}+d_{0},\\
 &b_{1}+c_{1}+d_{1},\cdots,b_{\beta-1}+c_{\beta-1}+d_{\beta-1},\\ 
  &a_{0}+b_{0}+c_{0}+d_{0},a_{1}+b_{1}+c_{1}+d_{1},\cdots,\\&a_{\beta-2}+b_{\beta-2}+c_{\beta-2}+d_{\beta-2}),  
  \end{split}
 \end{equation*}
On the other hand,
\begin{equation*}
 \begin{split}
  \tau(w)&=((1+u)x_{\alpha-1},x_{0},\cdots,x_{\alpha-2},\\&(1+uv)y_{\beta-1},y_{0},\cdots,y_{\beta-2})\\
  &=( r_{\alpha-1}+u(r_{\alpha-1}+q_{\alpha-1}),r_{0}+uq_{0},\cdots,\\&r_{\alpha-2}+uq_{\alpha-2},\\
 & a_{\beta-1}+ub_{\beta-1}+vc_{\beta-1}+uv(a_{\beta-1}+d_{\beta-1}),\\
  &a_{0}+ub_{0}+vc_{0}+uvd_{0},\cdots,\\&a_{\beta-2}+ub_{\beta-2}+vc_{\beta-2}+uvd_{\beta-2}).
 \end{split}
 \end{equation*}
 Hence,
 \begin{equation*}
  \begin{split}
   \Psi\tau(w)&=(q_{\alpha-1}+r_{\alpha-1},q_{0},q_{1},\cdots,q_{\alpha-2},q_{\alpha-1},\\&q_{0}+r_{0},q_{1}+r_{1},\cdots,q_{\alpha-2}+r_{\alpha-2},\\
   & a_{\beta-1}+d_{\beta-1},d_{0},d_{1},\cdots,d_{\beta-1},a_{0}+d_{0},\\&a_{1}+d_{1},\cdots,a_{\beta-2}+d_{\beta-2},\\
 &a_{\beta-1}+b_{\beta-1}+d_{\beta-1},b_{0}+d_{0},b_{1}+d_{1},\cdots,b_{\beta-1}+d_{\beta-1},\\
 &a_{0}+b_{0}+d_{0},a_{1}+b_{1}+d_{1},\cdots,a_{\beta-2}+b_{\beta-2}+d_{\beta-2},\\
 &a_{\beta-1}+c_{\beta-1}+d_{\beta-1},c_{0}+d_{0},c_{1}+d_{1},\cdots,c_{\beta-1}+d_{\beta-1},\\
 & a_{0}+c_{0}+d_{0},a_{1}+c_{1}+d_{1},\cdots,a_{\beta-2}+c_{\beta-2}+d_{\beta-2}\\
 &a_{\beta-1}+b_{\beta-1}+c_{\beta-1}+d_{\beta-1},b_{0}+c_{0}+d_{0},\\&b_{1}+c_{1}+d_{1},\cdots,b_{\beta-1}+c_{\beta-1}+d_{\beta-1},\\ 
  &a_{0}+b_{0}+c_{0}+d_{0},a_{1}+b_{1}+c_{1}+d_{1},\cdots,\\&a_{\beta-2}+b_{\beta-2}+c_{\beta-2}+d_{\beta-2}).   
 \end{split}
 \end{equation*}
 \end{proof}

\begin{theorem}
 A $\mathbb{Z}_{2}[u]\mathbb{Z}_{2}[u,v]$-additive code $C$ is a $(1+u,1+uv)$-additive constacyclic code if and only if 
 $\Psi(C)$ is a binary quasi-cyclic code of index $2.$
\end{theorem}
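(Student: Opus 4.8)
The plan is to reduce the theorem to the operator identity $\Psi\tau = \sigma_{2}\Psi$ already established in Lemma~\ref{d}, promoted from elements to subsets. First I would observe that the pointwise equality $\Psi(\tau(w)) = \sigma_{2}(\Psi(w))$, valid for every $w \in R_{1}^{\alpha} \times R_{2}^{\beta}$, immediately yields the set-level equality $\Psi(\tau(C)) = \sigma_{2}(\Psi(C))$ for any subset $C$: applying $\Psi$ to $\tau(C) = \{\tau(w) : w \in C\}$ and using the lemma termwise gives exactly $\{\sigma_{2}(\Psi(w)) : w \in C\} = \sigma_{2}(\Psi(C))$. This identity is the engine for both implications.

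For the forward direction I would suppose $C$ is $(1+u,1+uv)$-additive constacyclic, i.e. $\tau(C) = C$. Applying $\Psi$ to both sides and invoking the set-level identity gives $\sigma_{2}(\Psi(C)) = \Psi(\tau(C)) = \Psi(C)$, so $\Psi(C)$ is invariant under the index-$2$ quasi-shift and is therefore a binary quasi-cyclic code of index $2$.

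For the converse, assume $\sigma_{2}(\Psi(C)) = \Psi(C)$. The same identity gives $\Psi(\tau(C)) = \sigma_{2}(\Psi(C)) = \Psi(C)$. To conclude $\tau(C) = C$ I would use that $\Psi$ is injective: from the definition of the Gray map one reads off $q_{i}$ from the first $\alpha$ coordinates, then recovers $r_{i}$ from the next $\alpha$ as the difference with $q_{i}$, and likewise recovers $d_{j}, a_{j}, b_{j}, c_{j}$ from the first four $\beta$-blocks, so every preimage component is determined. Injectivity then upgrades $\Psi(\tau(C)) = \Psi(C)$ to $\tau(C) = C$, as required.

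I expect no serious obstacle once Lemma~\ref{d} is in hand, as the theorem becomes a formal consequence of the intertwining relation together with injectivity of the Gray map. The one point needing a word of care is the converse, where one must not conflate $\Psi$ being injective with its being surjective; indeed $\Psi$ maps into $\mathbb{Z}_{2}^{2\alpha+8\beta}$, a strictly larger space than its image since $|R_{1}^{\alpha} \times R_{2}^{\beta}| = 2^{2\alpha+4\beta}$, but injectivity alone suffices to cancel $\Psi$ from both sides of $\Psi(\tau(C)) = \Psi(C)$.
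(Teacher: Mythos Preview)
Your proposal is correct and follows essentially the same route as the paper: both directions are reduced to the intertwining identity $\Psi\tau=\sigma_{2}\Psi$ of Lemma~\ref{d}, applied at the level of sets. If anything, your treatment of the converse is more complete than the paper's, which stops at $\Psi(\tau(C))=\Psi(C)$ without explicitly invoking the injectivity of $\Psi$ to conclude $\tau(C)=C$; your observation that the components $q_i,r_i,d_j,a_j,b_j,c_j$ can be read off from successive blocks is exactly the justification needed.
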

\begin{proof}
 If $C$ is $(1+u,1+uv)$-additive constacyclic, then using Lemma(\ref{d}) we have
 $$\sigma_{2}(\Psi(C))=\Psi(\tau(C))=\Psi(C).$$
 Hence, $\Psi(C)$ is a binary quasi-cyclic code of index $2.$ Conversely, if $\Psi(C)$ is a binary quasi-cyclic code of index $2,$
 then using Lemma (\ref{d}) again we get
$$\Psi(\tau(C))=\sigma_{2}(\Psi(C))=\Psi(C).$$
\end{proof}
\begin{corollary}
 The image of a $(1+u,1+uv)$-additive constacyclic code over $R_{1}^{\alpha}\times R_{2}^{\beta}$ under the Gray 
 map $\Psi$ is a distance invariant binary quasi-cyclic code of index $2.$
\end{corollary}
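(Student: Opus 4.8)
The plan is to separate the corollary into its two assertions and handle them independently. The quasi-cyclicity requires no fresh argument: the theorem immediately preceding this corollary already establishes that if $C$ is a $(1+u,1+uv)$-additive constacyclic code then $\Psi(C)$ is a binary quasi-cyclic code of index $2$. Thus the only genuine content lies in the phrase \emph{distance invariant}, which I read as the statement that the Hamming distance distribution of $\Psi(C)$ measured from an arbitrary fixed codeword is independent of that codeword.

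The first step I would carry out is to record that the Gray map $\Psi$ is a weight-preserving group homomorphism. Additivity is immediate from the construction: the underlying additive group of $R_1^{\alpha}\times R_2^{\beta}$ splits coordinatewise over the $\mathbb{Z}_2$-bases $\{1,u\}$ and $\{1,u,v,uv\}$, and each output coordinate of $\Psi$ is a fixed $\mathbb{Z}_2$-linear combination of the components $r_i,q_i,a_j,b_j,c_j,d_j$, so $\Psi(w+w')=\Psi(w)+\Psi(w')$. Weight preservation, $w_H(\Psi(w))=w_L(w)$, is precisely the design property of $\Psi$: on the first $\alpha$ coordinates it reduces to the definition of $\chi_L$ together with $w_L(a+ub)=w_H(b,a\oplus b)$, and on the last $\beta$ coordinates to the definition of $\phi_L$ together with the corresponding eight-term Lee weight. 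Combining additivity and weight preservation gives, for any $w,w'\in R_1^{\alpha}\times R_2^{\beta}$,
\begin{equation*}
 d_H\bigl(\Psi(w),\Psi(w')\bigr)=w_H\bigl(\Psi(w)+\Psi(w')\bigr)=w_H\bigl(\Psi(w+w')\bigr)=w_L(w+w')=d_L(w,w'),
\end{equation*}
so that $\Psi$ is an isometry carrying the Lee distance on $R_1^{\alpha}\times R_2^{\beta}$ to the Hamming distance on $\mathbb{Z}_2^{2\alpha+8\beta}$.

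Distance invariance then follows from the module structure of $C$. Fix a codeword $c\in C$. Since $C$ is an $R_2$-submodule it is in particular an additive subgroup, so translation $c'\mapsto c'+c$ is a bijection of $C$ onto itself. Using the isometry above,
\begin{equation*}
 d_H\bigl(\Psi(c),\Psi(c')\bigr)=d_L(c,c')=w_L(c+c')=w_H\bigl(\Psi(c+c')\bigr),
\end{equation*}
and as $c'$ ranges over $C$ the element $c+c'$ ranges over all of $C$. Hence the multiset of Hamming distances from $\Psi(c)$ to the words of $\Psi(C)$ equals $\{\,w_H(\Psi(z)):z\in C\,\}$, the weight enumerator of $\Psi(C)$, which is manifestly independent of $c$. (Equivalently, additivity already forces $\Psi(C)$ to be a binary linear code, for which distance invariance is automatic; the isometry is what identifies those distances with Lee weights of $C$.) Together with the quasi-cyclicity inherited from the previous theorem, this completes the proof.

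The step I expect to demand the most care is verifying that $\Psi$ is genuinely weight-preserving across the mixed alphabet, i.e.\ the bookkeeping that the sum of the two block Lee weights on $R_1^{\alpha}\times R_2^{\beta}$ matches the single Hamming weight of the concatenated image. The conceptual point worth flagging is that the whole argument uses only the \emph{additive} subgroup structure of $C$; the twisted scalar multiplication of Eq.\ref{a}, which is what upgrades $C$ from a group to an $R_2$-module, plays no role here, so the conclusion is insensitive to that twist.
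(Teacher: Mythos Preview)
Your proposal is correct. The paper itself gives no proof for this corollary at all; it is stated as an immediate consequence of the preceding theorem together with the Gray map being a distance- and weight-preserving map (a fact the paper asserts in its conclusion without verification), so your argument supplies exactly the details the paper omits and is fully aligned with its intended reasoning.
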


\section{Conclusion}
In this paper, we studied $\mathbb{Z}_{2}[u]\mathbb{Z}_{2}[u,v]$-additive codes some property, including generator and parity check matrices for the 
codes. We fund the Gray map $\Phi$ is a distance preserving map and weight preserving map as well. At the end of this paper, we introduce
the structure of  $\mathbb{Z}_{2}[u]\mathbb{Z}_{2}[u,v]$-additive cyclic codes and constacyclic codes.

\end{document}